\documentclass[letterpaper,10 pt, conference]{ieeeconf}  
\usepackage{mymacros}
\usepackage[
backend=biber,
style=ieee,
]{biblatex}
\AtEveryBibitem{%
    \clearfield{doi}
    \clearfield{issn}
}
\DeclareSourcemap{
  \maps[datatype=bibtex]{
    \map{
      \step[fieldset=urldate, null]
    }
  }
}

\IEEEoverridecommandlockouts                        
\usepackage{amsmath} 
  \usepackage{amsthm}
\usepackage{amssymb}  
\usepackage{mathtools}
\usepackage{lipsum}
\usepackage[hidelinks]{hyperref}

\newtheorem{proposition}{Proposition}

\newtheorem{lemma}{Lemma}

\newtheorem{assumption}{Assumption}

\usepackage{graphicx}
\usepackage{caption}
\usepackage{subcaption}

\usepackage{tikz}
\usepackage{standalone}
\usepackage{svg}
\usetikzlibrary{arrows.meta, positioning, calc, backgrounds}

\title{\LARGE \bf
Optimal Functional Incentives for Control:\\The Linear-Quadratic Case with Bilinear Incentives
}

\author{Jonas G. Matt, Saverio Bolognani, Florian Dörfler
\thanks{The authors are with the Automatic Control Laboratory (IfA) at ETH Zürich. Email: \{jmatt, bsaverio, doerfler\}@ethz.ch}%
\thanks{This work was supported by the Swiss Federal Office of Energy via the grant SI/502734 MAESTRO.}
}

\begin{document}

\maketitle
\thispagestyle{empty}
\pagestyle{empty}

\begin{abstract}

We study the design of functional incentive mechanisms for dynamical systems, in which a leader parametrizes a fixed incentive function to motivate a self-interested follower to actuate the system beneficially over an extended horizon, without real-time revision of the incentive.
This stands in contrast to the adaptive paradigm, in which the incentive is itself a continuously updated control variable.
We formalize the problem as a discrete-time bi-level optimal control problem and derive analytical results for the linear-quadratic case with bilinear incentives and a myopic follower.
Specifically, we establish a necessary and sufficient stability condition for the induced closed-loop system, derive a closed-form expression for the gradient of the expected leader cost with respect to the incentive parameter matrix, and obtain a fully closed-form cost expression in the scalar setting.
Based on the latter, explicit characterizations of the optimal incentive parameter are provided in two asymptotic regimes: the infinite-horizon limit and the limit of high follower cost.
For long horizons, the optimal incentive is shown to become independent of the follower's private cost parameter, with direct implications for robust mechanism design under private information.
\end{abstract}

\section{Introduction}
\label{sec:introduction}

The control of modern infrastructure systems such as power grids 
increasingly relies on resources owned and operated by independent agents whose participation must be economically remunerated rather than administratively mandated.
Incentive-based control emerges as a promising paradigm for such settings: a system operator (the \emph{leader}) designs a payment mechanism that incentivizes participating agents (the \emph{followers}) to take actions aligned with system-level objectives.

\subsection*{Adaptive vs.\ Functional Incentive Mechanisms}
\begin{figure}[t]
    \vspace{6pt}
    \centering
    \includegraphics[width=.9\linewidth]{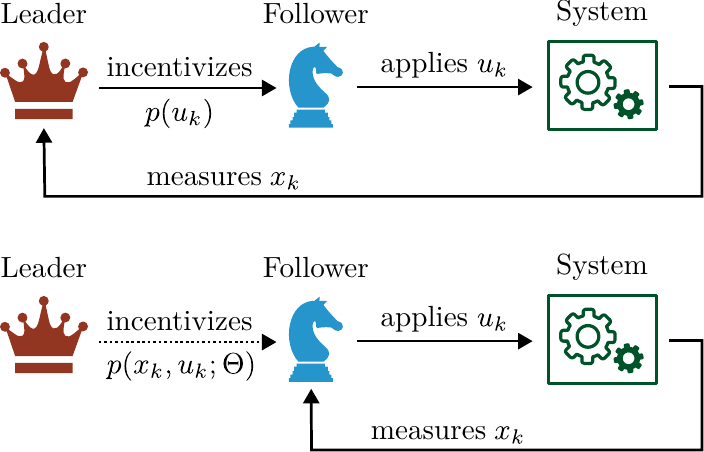}
    \vspace{6pt}
    \caption{
        Adaptive (top) vs. functional (bottom) incentive-based control. In adaptive schemes, the leader must update the incentive signal to elicit new open-loop follower responses. The control loop is effectively closed via the leader. In contrast, functional incentives aim to rationalize beneficial closed-loop control by the follower, on a timescale that is as fast as the follower can interact with the system; the incentive parameters $\Theta$ are fixed in advance and are not part of this fast loop.
    }
    \label{fig:adaptive-vs-functional}
\end{figure}
The predominant approach in the existing literature treats incentives as \emph{adaptive} signals that the leader updates continuously in response to observed states or agent behavior.
Ratliff et al.~\cite{ratliffPerspectiveIncentiveDesign2019,ratliffAdaptiveIncentiveDesign2021} develop online algorithms for adaptive incentive signals and address information asymmetry through simultaneous learning of the follower's decision problem. 
Feedback optimization  methods~\cite{zhouIncentiveBasedOnlineOptimization2018,cavraroFeedbackOptimizationIncentives2024b} adjust adaptive price signals in real time based on system measurements.
Maheshwari et al.~\cite{maheshwariAdaptiveIncentiveDesign2025a} characterize adaptive incentives via the difference between the leader's and follower's marginal costs.
In all of these frameworks, the incentive is a real-time decision variable of the leader that must change as frequently as new follower responses should be elicited.
As one consequence, disturbances can be rejected only as fast as the leader adapts the incentive.

This paper investigates a structurally distinct regime: \emph{functional incentive mechanisms}.
The leader and follower agree in advance to an ``incentive function'', whose parametrization is fixed over an extended time horizon.
The incentive function specifies remuneration as a function of the follower's actions and the current system state but is not updated by the leader as the system evolves.
The follower responds autonomously to changing conditions under this fixed scheme. 
The central design requirement is therefore that desirable behavior for the leader be simultaneously rational for the follower.
In the absence of adaptation, this must be achieved through the structure and parametrization of the incentive function itself.
While the follower's best response to an adaptive price is simply a feedforward decision, the best response to the incentive function can be a feedback law.
This allows to close a direct loop between the system and the follower, as opposed to closing the loop through the leader.
Figure \ref{fig:adaptive-vs-functional} illustrates this fundamental difference.

In practice, functional incentives thus reflect the leader's desire or necessity to \emph{``outsource''} closed-loop control.
The leader may lack the observability, computational resources, or contractual flexibility for real-time intervention, and must instead encode the desired closed-loop behavior into the (offline) incentive design.
A concrete example of the described setup arises in the voltage support program of the Swiss transmission system operator~\cite{karagiannopoulosActiveDistributionGrids2021,jiangVoltageSupportProcurement2025a}.
Participants in this program (distribution system operators and power plants) are compensated for reactive power provision according to a voltage-dependent incentive, rather than through real-time market clearing.
The incentive structure remains fixed over extended periods while the participants respond to real-time grid conditions.
As distributed energy resources proliferate, such mechanisms are expected to become more prevalent across a range of ancillary services.
Accordingly, some functional incentive mechanisms have been proposed for the specific context of power systems \cite{uchidaIncentivizingMarketControl2019,wasaOptimalAgencyContract2019}.
However, the principled design of functional incentive mechanisms for general dynamical systems has overall received little attention in the existing literature.

\subsection*{Other Related Work}

\paragraph{Control-theoretic incentive design}
The intersection of incentive design and control theory was actively examined in the 1980s within the framework of dynamic Stackelberg games.
Ho et al.~\cite{hoControltheoreticViewIncentives1982}, Başar and Ho~\cite{basarAffineIncentiveSchemes1982},
Zheng et al.~\cite{zhengExistenceDerivationOptimal1982}, Saksena et al.~\cite{saksenaOptimalNearOptimalIncentive1983}, and others established the existence of optimal affine incentive strategies for eliciting a prespecified follower response.
Nonlinear extensions were subsequently studied in~\cite{tolwinskiClosedloopStackelbergSolution1981, zhangNonlinearIncentiveStrategy1986,liuOptimalIncentiveStrategy1992, liApproachDiscretetimeIncentive2002}.
A fundamental difference separates these formulations from the present work: they take the desired follower input trajectory as a given design target, computed separately by the leader prior to incentive construction, and typically establish existence or construction results for an incentive achieving that prescribed target.
In the proposed formulation, no such trajectory is prescribed; instead, the leader aims to design a mechanism under which the follower's self-interested optimization produces favorable system behavior across all encountered states, posing the incentive design itself as a (bi-level) optimization problem over the parameter values.
Additionally, the earlier literature generally does not take into account that the leader bears the cost of incentive payments, a budget-coupling that is central to finding the economically optimal incentive.

\paragraph{Principal-agent models and contract design}
The design of contracts that align the objectives of a self-interested agent with those of a principal is a classical topic in economics and management theory~\cite{holmstromAggregationLinearityProvision1987,laffontTheoryIncentivesPrincipalAgent2001}.
Beyond static formulations, a substantial dynamic and continuous-time contract theory literature addresses moral hazard under hidden agent effort, modeling the principal's observation as a stochastic output process and deriving the optimal contract via stochastic control or backward stochastic differential equations~\cite{sannikovContinuoustimeVersionPrincipalagent2008,cvitanicContractTheoryContinuoustime2013}.
In contrast, the present setting features fully observed state and action, deterministic linear dynamics, and a focus on the closed-loop stability and tracking performance of the induced dynamical system, placing the framework closer to control-theoretic incentive design than to hidden-action contracting.

\paragraph{Reverse Stackelberg games}
The game-theoretic structure underlying the proposed formulation, in which the leader's decision variable is a \emph{function} rather than a point action, corresponds to the class of reverse, or inverse, Stackelberg games~\cite{grootReverseStackelbergGames2012}.
Existing optimality results for this class, including necessary and sufficient conditions for optimal affine leader functions~\cite{grootOptimalAffineLeader2016}, are developed for static, single-stage decision problems; the present paper instead embeds the reverse Stackelberg structure within a multi-stage dynamic optimal control problem, yielding closed-loop stability conditions and a gradient-based design procedure over an extended horizon.

\subsection*{Contributions}

The contributions of this paper are as follows.

\begin{itemize}
    \item \textbf{Problem formulation.}
    A general bi-level optimal control framework for functional incentive design is introduced, in which the leader's decision variable is the parameter $\Theta$ of an incentive function governing follower behavior over an extended horizon (Section~\ref{sec:problem-setup}).

    \item \textbf{Linear-quadratic-bilinear analysis.}
    For the special case of linear dynamics, quadratic costs, bilinear incentives, and a myopic follower, a connection to the classical linear quadratic regulator (LQR) problem is made and the following results are established (Section~\ref{sec:lqg}):
    \begin{enumerate}
        \item A necessary and sufficient condition for closed-loop stability.
        \item A closed-form expression for the gradient of the expected leader cost with respect to the incentive parameter matrix, enabling gradient-based optimization in the general multi-dimensional setting.
    \end{enumerate}
    And for the scalar case:
    \begin{enumerate}\setcounter{enumi}{2}
        \item A fully closed-form expression for the expected leader cost.
        \item Explicit characterizations of the optimal incentive parameter in two asymptotic regimes of practical interest: the infinite-horizon limit ($N \to \infty$) and the high follower cost limit ($R \to \infty$), including the observation that the long-horizon optimal incentive is independent of the follower's private cost parameter.
    \end{enumerate}
\end{itemize}
\section{Problem Setup}

We consider a discrete-time bi-level leader-follower (Stackelberg) game in which the leader shapes the behavior of a rational follower through a parametrized incentive function.
Figure~\ref{fig:setup} illustrates the general setup.

Let $x_k \in \mathcal{X} \subseteq \mathbb{R}^n$ denote the system state and $u_k \in \mathcal{U} \subseteq \mathbb{R}^m$ the control input provided by the follower at time step $k$.
The system evolves according to the discrete-time dynamics
\begin{equation}
    x_{k+1} = f(x_k, u_k),
    \label{eq:system-dynamics}
\end{equation}
where $f : \mathcal{X} \times \mathcal{U} \to \mathcal{X}$ is the state transition map.
The leader incurs a stage cost $\ell_L : \mathcal{X} \to \mathbb{R}$, which, for instance, penalizes deviations of the state from a desired reference.
The follower incurs a stage cost $\ell_F : \mathcal{U} \to \mathbb{R}$
on the control input.

The leader does not have direct authority over the input $u_k$.
Instead, she designs an incentive function $p : \mathcal{X} \times \mathcal{U} \times \mathcal{P} \to \mathbb{R}$, drawn from a parametrized family $\{p(\,\cdot\,;\Theta)\}_{\Theta \in \mathcal{P}}$, where $\mathcal{P} \subseteq \mathbb{R}^d$ denotes the parameter space.
The incentive $p(x_k, u_k; \Theta)$ represents the remuneration payment from the leader to the follower at stage $k$. It depends on the current system state $x_k$, the follower's chosen input $u_k$, and the fixed parameters $\Theta$.
The leader seeks to minimize her expected cost over all admissible parameter values:
\begin{equation}
    \min_{\Theta \in \mathcal{P}}\; J(\Theta)
    \;=\;
    \min_{\Theta \in \mathcal{P}}\;
    \E_{x_0} \Bigg[ \sum_{k=0}^{N-1}
        \ell_L(x_k) + p(x_k, u_k;\Theta) \Bigg],
    \label{eq:leader-problem}
\end{equation}
where $N$ denotes the leader's planning horizon, or potentially, the timescale between leader updates.
To make the problem analytically tractable, we make the following assumption for the response of the follower:
\begin{assumption}[Myopic Follower]
\label{ass:myopic}
    The follower optimizes their instantaneous profit at each stage,
    without accounting for future system evolution.
\end{assumption}
This is appropriate when the follower cannot forecast the future state trajectory or strongly discounts future payments, and it yields a tractable, single-level reformulation of the leader's problem.
Relaxing it to a farsighted follower who anticipates how $u_k$ affects future incentive payments would replace the static best response~\eqref{eq:follower-problem} with a dynamic programming problem for the follower, coupling the leader's design problem to the resulting value function; we revisit this extension in Section~\ref{sec:conclusion}.
Under Assumption \ref{ass:myopic}, the follower observes the current state $x_k$ and greedily selects the input that minimizes his own net cost under the prevailing incentive:
\begin{equation}
    u_k = \argmin_{v \in \mathcal{U}}\;
    \ell_F(v) - p(x_k, v;\Theta).
    \label{eq:follower-problem}
\end{equation}

A distinctive feature of the described formulation is the dependence of the incentive $p$ on the system state $x_k$.
This transforms $p$ from a simple price signal into a \emph{feedback} mechanism: even with $\Theta$ fixed, the system is controlled by the follower at fast sampling time.
The leader's optimization can be viewed as a one-time (or infrequent) design decision.
The leader may be unable or unwilling to revise incentives at the same rate at which the follower should react to system changes.
The incentive must therefore not only rationalize a particular response at a given state but establish some notion of optimal control as the rational behavior of the follower across all system conditions.

\label{sec:problem-setup}
\begin{figure}[t]
    \vspace{6pt}
    \centering
    \includegraphics[width=\linewidth]{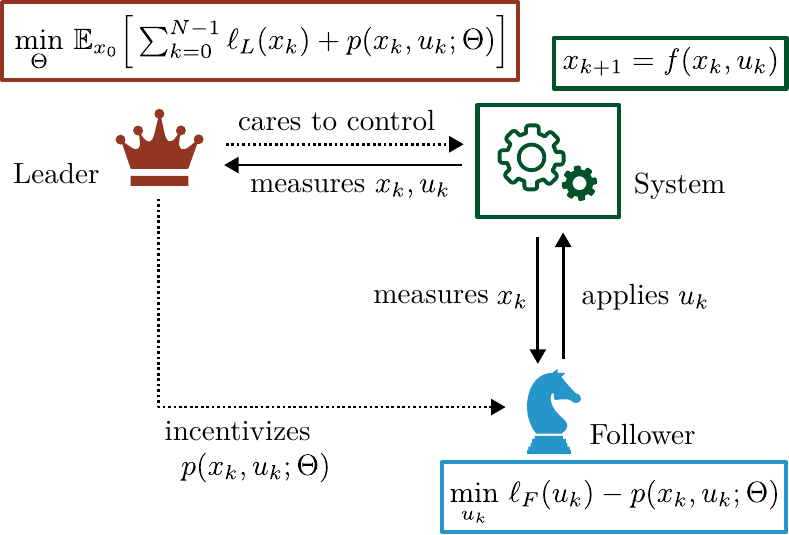}
    \caption{
        Schematic of the functional-incentive-based control setup.
        The leader designs an incentive function parametrized
        by $\Theta$; the follower responds at each stage according to
        their best response, inducing a closed-loop system whose
        behavior depends on the incentive design.
    }
    \label{fig:setup}
\end{figure}
\section{The Linear-Quadratic Case with Bilinear Incentives}
\label{sec:lqg}

In what follows, we derive analytical results for a special but insightful case.
The leader's stage cost is quadratic in the tracking error (the leader cares to track a reference state $\xref\in\mathcal{X}$), $\ell_L(x_k) = (x_k - \xref)^\top Q (x_k - \xref)$, with $Q\in\R^{n\times n}, Q\succ0$;
the follower bears a quadratic input cost, $\ell_F(u_k) = u_k^\top R u_k$, with $R\in\R^{m\times m}, R\succ0$.
The incentive takes the bilinear form
$p(x_k, u_k;\Theta) = (x_k - \xref)\,\Theta\, u_k$, with design parameter $\Theta \in \mathbb{R}^{n \times m}$.
The bilinear form admits a natural interpretation as a state-dependent price per unit of the follower's input and, combined with the quadratic follower cost below, yields a tractable, proportional best response.
The system dynamics are linear, $x_{k+1} = A x_k + B u_k$, with $A \in \mathbb{R}^{n \times n}$ and $B \in \mathbb{R}^{n \times m}$.

The leader's problem therefore reads
\begin{align}
    \min_{\Theta}\; \E_{x_0} \sum_{k=0}^{N-1}&
        \bigl[(x_k - \xref)^\top Q (x_k - \xref)
        + (x_k - \xref)\,\Theta\, u_k\bigr] \nonumber\\
    \text{s.t.}\quad
    & x_{k+1} = A x_k + B u_k, \label{eq:leader-lq}\\
    & u_k = \argmin_{v}\; v^\top R v
        - (x_k - \xref)\,\Theta\, v. \nonumber
\end{align}
We assume that the initial state $x_0\in\mathcal{X}$ is random, and write $\mu_0 := \E[x_0-\xref]$ and $\Sigma_0 := \mathrm{Cov}(x_0-\xref)$ for the mean and covariance of the initial tracking error, and $\bar x_0 := \mu_0+\xref = \E[x_0]$ for the mean of $x_0$ itself.

We can define the \emph{social cost} of the system as the sum of the
leader's and follower's costs:
\begin{align}
    &{\textstyle\sum}_{k=0}^{N-1}\Big[\ell_L(x_k)+p(x_k,u_k;\Theta) + \ell_F(u_k)-p(x_k,u_k;\Theta)\Big] \nonumber\\
    &\quad=\;
    {\textstyle\sum}_{k=0}^{N-1}\Big[(x_k-\xref)^\top Q(x_k-\xref) + u_k^\top R\, u_k\Big].
    \label{eq:social-cost}
\end{align}
The incentive terms cancel, and the social cost reduces to a standard linear-quadratic expression that resembles a linear-quadratic-regulator problem with tracking objective.
This establishes a natural connection between the described problem and classical optimal control theory.
However, the cancellation in~\eqref{eq:social-cost} does not imply that the closed-loop trajectory induced by any incentive function minimizes the social cost.
The outcome of the bi-level game is determined by the follower's best response to the incentive, not by joint minimization of the aggregate objective.
One should therefore expect a gap to the social optimum, analogous to the price of anarchy in game theory.

Under Assumption~\ref{ass:myopic}, the follower's best response conveniently reduces to the proportional control law
\begin{equation}
    u_k = \tfrac{1}{2} R^{-1} \Theta^\top (x_k - \xref),
    \label{eq:follower-input}
\end{equation}
and problem \eqref{eq:leader-lq} collapses to a single-level problem.
Substituting~\eqref{eq:follower-input} into the system dynamics closes the loop between the follower and the system state; the incentive parameter $\Theta$ itself, however, remains fixed over the horizon and is designed open-loop, without revision based on the realized trajectory (cf.\ Fig.~\ref{fig:adaptive-vs-functional}).
We consequently define the closed-loop matrix
\begin{equation}
    A_\Theta = A + \tfrac{1}{2} B R^{-1} \Theta^\top
    \label{eq:A-Theta-def}
\end{equation}
Under the closed-loop dynamics, the tracking error defined as $e_k = x_k - \xref$ evolves according to the affine recursion
\begin{equation}
    e_{k+1} = A_\Theta e_k + (A-I)\xref =: A_\Theta e_k + g,
    \label{eq:error-recursion}
\end{equation}
which unrolls to $e_k = A_\Theta^k e_0 + \sum_{j=0}^{k-1}A_\Theta^j g$.
We deduce the following stability condition for the closed-loop system:
\begin{lemma}[Closed-Loop Stability]
\label{lem:stability}
    Under Assumption~\ref{ass:myopic}, the tracking error $e_k$ converges to the finite steady state $e^\star=(I-A_\Theta)^{-1}g$ for every initial condition $e_0$ if and only if $A_\Theta$ is Schur stable (i.e., all eigenvalues lie strictly inside the unit disk).
\end{lemma}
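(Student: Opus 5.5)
The plan is to work with the deviation from the candidate fixed point. Sufficiency and necessity are then handled separately.

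\emph{Sufficiency.} Assume $A_\Theta$ is Schur stable. Then $1$ is not an eigenvalue of $A_\Theta$, so $I-A_\Theta$ is invertible and $e^\star=(I-A_\Theta)^{-1}g$ is well defined. It satisfies $e^\star=A_\Theta e^\star+g$. Define the deviation $\delta_k:=e_k-e^\star$. Subtracting the fixed-point identity from the recursion~\eqref{eq:error-recursion} gives the homogeneous recursion $\delta_{k+1}=A_\Theta\delta_k$, hence $\delta_k=A_\Theta^k(e_0-e^\star)$. Schur stability gives $\rho(A_\Theta)<1$, so by Gelfand's formula (or the Jordan form) $A_\Theta^k\to0$. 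Therefore $e_k\to e^\star$ for every $e_0$.

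\emph{Necessity.} Assume $e_k\to e^\star$ for every initial condition $e_0$. The statement implicitly presumes that $e^\star=(I-A_\Theta)^{-1}g$ exists; I will read it that way, so that $I-A_\Theta$ is invertible and $e^\star$ is again a fixed point. The same subtraction gives $A_\Theta^k(e_0-e^\star)\to0$ for every $e_0$. As $e_0$ ranges over $\mathbb{R}^n$, the vector $e_0-e^\star$ ranges over all of $\mathbb{R}^n$, so $A_\Theta^k v\to 0$ for all $v$. Applying this to the standard basis vectors shows $A_\Theta^k\to0$ entrywise.

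If some eigenvalue $\lambda$ satisfied $|\lambda|\ge1$, take a possibly complex eigenvector $w$ and write $w=w_r+iw_i$. Then $A_\Theta^k w=\lambda^k w$, whose norm does not tend to zero. On the other hand, linearity gives $A_\Theta^k w=A_\Theta^k w_r+iA_\Theta^k w_i\to0$. This contradiction shows that all eigenvalues lie strictly inside the unit disk.

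\emph{Main obstacle.} The delicate point is the existence of $e^\star$ in the necessity direction when $I-A_\Theta$ is singular. If $1$ is an eigenvalue, the formula $(I-A_\Theta)^{-1}g$ is undefined, so the claim as stated cannot hold. One might instead ask for convergence to \emph{some} finite limit. In that case, convergence for two initial conditions $e_0$ and $e_0'$ implies that $A_\Theta^k(e_0-e_0')$ converges for all differences. Any limit $L$ of the deviation must satisfy $L=A_\Theta L$, and convergence of $A_\Theta^k$ alone permits the eigenvalue $1$. So the argument must rely on the statement's explicit form of $e^\star$, which presupposes invertibility, and I will state this interpretation explicitly.
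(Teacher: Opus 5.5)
Your proof is correct and follows the same route as the paper: sufficiency from $A_\Theta^k\to0$ (the paper phrases this via the Neumann series $\sum_{j=0}^{k-1}A_\Theta^j\to(I-A_\Theta)^{-1}$), and necessity from an eigenvalue with $|\mu|\ge1$ whose mode does not decay. Your identity $e_k-e^\star=A_\Theta^k(e_0-e^\star)$ and your remark that the statement presupposes invertibility of $I-A_\Theta$ make the paper's necessity step more rigorous; the paper only asserts that $A_\Theta^k e_0$ fails to vanish, without subtracting $e^\star$ or treating the singular case.
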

\begin{proof}
    If $A_\Theta$ is Schur stable, then $A_\Theta^k\to\mathbf 0$ and $\sum_{j=0}^{k-1}A_\Theta^j\to(I-A_\Theta)^{-1}$ as $k\to\infty$, so $e_k\to(I-A_\Theta)^{-1}g=e^\star$ for every $e_0$.
    Conversely, if $A_\Theta$ is not Schur stable, it has an eigenvalue $\mu$ with $|\mu|\geq1$; choosing $e_0$ with a nonzero component along the associated (real) eigenspace, the homogeneous term $A_\Theta^k e_0$ does not vanish as $k\to\infty$, so $e_k$ fails to converge to $e^\star$. Hence convergence for every $e_0$ fails.
\end{proof}
If $A_\Theta$ is Schur stable, the closed-loop system trajectory converges asymptotically to a steady state.
However, because the best response by the follower is purely proportional in the tracking error, we should expect a non-zero steady-state error in general.
The error vanishes only if the reference $\xref$ is an equilibrium of the open-loop system, that is if $(A-I)\xref = \mathbf{0}$.

\subsection{The multi-dimensional case}

Under the closed-loop dynamics, the mean and covariance of the error $e_k = x_k-\xref$ evolve according to
\begin{align}
    \mu_{k+1} &= A_\Theta \mu_k + (A - I)\xref,
        \quad \mu_0 = \E[x_0 - \xref], \nonumber\\
    \Sigma_{k+1} &= A_\Theta \Sigma_k A_\Theta^\top,
        \quad \Sigma_0 = \mathrm{Cov}(x_0 - \xref).
    \label{eq:error-dynamics}
\end{align}
Defining $S = Q+\tfrac{1}{2}\Theta R^{-1} \Theta^\top$, the leader's expected cost can then be written as
\begin{align}
    \label{eq:J-multi-dim}
    J(\Theta)
    = \E\Bigl[\sum_{k=0}^{N-1} e_k^\top S e_k\Bigr]
    = \sum_{k=0}^{N-1} \Tr(S\Sigma_k) + \mu_k^\top S \mu_k
\end{align}
The expected leader cost $J(\Theta)$ is not convex in $\Theta$ in general.
Nonetheless, numerical optimization can be used to find local minima of \eqref{eq:J-multi-dim} for the offline design of the incentive function.
In the following, we derive the gradient of \eqref{eq:J-multi-dim} with respect to $\Theta$, to enable gradient-based optimization.

\begin{proposition}[Gradient of the Leader Cost]
\label{prop:gradient}
    Recursively define the following adjoint variables backward in time:
    \begin{align*}
        \lambda_k &= \bigl(2Q + \Theta R^{-1} \Theta^\top\bigr)\mu_k
            + A_\Theta^\top \lambda_{k+1},
            \quad \lambda_N = \mathbf{0}, \\
        \Lambda_k &= Q + \tfrac{1}{2}\Theta R^{-1}\Theta^\top
            + A_\Theta^\top \Lambda_{k+1} A_\Theta,
            \quad \Lambda_N = \mathbf{0}.
    \end{align*}
    With $\mu_k, \Sigma_k$ computable from the problem parameters as in \eqref{eq:error-dynamics}, the gradient of the expected leader cost with respect to $\Theta$ is
    \begin{align}
        \nabla_\Theta J(\Theta)
        = \sum_{k=0}^{N-1}
        \Bigl[&\bigl(\Sigma_k + \mu_k \mu_k^\top\bigr)\Theta \nonumber\\
        +\; &\bigl(\tfrac{1}{2}
            \mu_k \lambda_{k+1}^\top
            + \Sigma_k A_\Theta^\top \Lambda_{k+1}
        \bigr) B \Bigr] R^{-1}.
        \label{eq:gradient}
    \end{align}
\end{proposition}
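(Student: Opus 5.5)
We prove the formula by the standard adjoint (Lagrangian) approach. We treat the mean and covariance recursions \eqref{eq:error-dynamics} as equality constraints and differentiate $J(\Theta)=\sum_{k=0}^{N-1}\Tr(S\Sigma_k)+\mu_k^\top S\mu_k$ through them. We split $J$ into a mean part $J_\mu=\sum_k \mu_k^\top S\mu_k$ and a covariance part $J_\Sigma=\sum_k\Tr(S\Sigma_k)$, and treat each separately.

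\textbf{Covariance part.} We introduce matrix multipliers $\Lambda_{k+1}$ for $\Sigma_{k+1}=A_\Theta\Sigma_kA_\Theta^\top$. Stationarity in $\Sigma_k$ yields exactly $\Lambda_k=S+A_\Theta^\top\Lambda_{k+1}A_\Theta$ with $\Lambda_N=\mathbf 0$. With this choice the perturbation of $J_\Sigma$ reduces to explicit terms:
\[
\delta J_\Sigma=\sum_k\Tr(\delta S\,\Sigma_k)+\Tr\!\big(\Lambda_{k+1}(\delta A_\Theta\Sigma_kA_\Theta^\top+A_\Theta\Sigma_k\delta A_\Theta^\top)\big).
\]
Equivalently, one can verify directly that $J_\Sigma=\Tr(\Lambda_0\Sigma_0)$ and differentiate the Lyapunov-type recursion by telescoping. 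Using $\delta S=\tfrac12(\delta\Theta R^{-1}\Theta^\top+\Theta R^{-1}\delta\Theta^\top)$, $\delta A_\Theta=\tfrac12BR^{-1}\delta\Theta^\top$, and the symmetry of $\Lambda_{k+1}$ and $\Sigma_k$, we then read off the gradient via $\Tr(M^\top\delta\Theta)$. The $\delta S$ term gives $\Sigma_k\Theta R^{-1}$. The two $\delta A_\Theta$ terms are equal and together give $\Sigma_kA_\Theta^\top\Lambda_{k+1}BR^{-1}$.

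\textbf{Mean part.} We use vector multipliers for $\mu_{k+1}=A_\Theta\mu_k+g$, noting that $g$ does not depend on $\Theta$. Stationarity in $\mu_k$ gives the adjoint $\lambda_k=2S\mu_k+A_\Theta^\top\lambda_{k+1}$ with $\lambda_N=\mathbf 0$, which matches the stated recursion since $2S=2Q+\Theta R^{-1}\Theta^\top$. The remaining explicit variation is
\[
\delta J_\mu=\sum_k \mu_k^\top\delta S\mu_k+\lambda_{k+1}^\top\delta A_\Theta\mu_k .
\]
The first term gives $\mu_k\mu_k^\top\Theta R^{-1}$. The second is $\tfrac12\lambda_{k+1}^\top BR^{-1}\delta\Theta^\top\mu_k=\tfrac12\Tr(\delta\Theta^\top\mu_k\lambda_{k+1}^\top BR^{-1})$, which gives $\tfrac12\mu_k\lambda_{k+1}^\top BR^{-1}$. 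Summing the mean and covariance contributions yields \eqref{eq:gradient}.

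\textbf{Main obstacle.} The delicate step is the index bookkeeping when the terms are rearranged, that is, checking that the multiplier attached to the constraint at step $k$ is $\lambda_{k+1}$ (resp. $\Lambda_{k+1}$) and that the terminal conditions are zero because the sum stops at $N-1$. A second point to check is the factor-of-two conventions. In the mean adjoint, $\lambda$ carries the $2S$, which is why a $\tfrac12$ appears in front of $\mu_k\lambda_{k+1}^\top$. In the covariance adjoint, $\Lambda$ carries $S$, and the two symmetric $\delta A_\Theta$ contributions combine to cancel the $\tfrac12$ from $\delta A_\Theta$. To guard against errors, we would sanity-check the result at $N=1$, where $\nabla J=(\Sigma_0+\mu_0\mu_0^\top)\Theta R^{-1}$, and at $N=2$ by direct differentiation.
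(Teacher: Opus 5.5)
Your proposal is correct and follows essentially the same route as the paper: a discrete-time adjoint (Lagrangian, equivalently summation-by-parts) argument on the mean and covariance recursions with exactly the stated $\lambda_k$ and $\Lambda_k$, followed by the same differentials $dS$ and $dA_\Theta$ and the same trace identities to read off the three terms of the gradient. Splitting $J$ into mean and covariance parts, and the alternative check $J_\Sigma=\Tr(\Lambda_0\Sigma_0)$, only reorganize the argument and do not change it.
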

\begin{proof}
We compute the first variation of $J(\Theta)$.
Note that $dS = \tfrac{1}{2}(d\Theta R^{-1}\Theta^\top + \Theta R^{-1}d\Theta^\top)$
and, from~\eqref{eq:A-Theta-def}, $dA_\Theta = \tfrac{1}{2}BR^{-1}d\Theta^\top$.
Using $\mu_k^\top dS\,\mu_k = \Tr(dS\,\mu_k\mu_k^\top)$,
\begin{equation*}
    dJ = \sum_{k=0}^{N-1}\Bigl[
        \Tr\bigl[dS(\Sigma_k+\mu_k\mu_k^\top)\bigr]
        + (2S\mu_k)^\top d\mu_k
        + \Tr[S\,d\Sigma_k]
    \Bigr].    
\end{equation*}
Define the adjoint variables as in the proposition statement.
A standard discrete-time adjoint argument (summation by parts applied to the
dynamics of $\mu_k$ and $\Sigma_k$) gives
\begin{align*}
    &\sum_{k=0}^{N-1}\Bigl[(2S\mu_k)^\top d\mu_k + \Tr(S\,d\Sigma_k)\Bigr]
    \\
    = &\sum_{k=0}^{N-1}\Bigl[
        \lambda_{k+1}^\top(dA_\Theta)\mu_k
        + \Tr\bigl(\Lambda_{k+1}\,d\Sigma_{k+1}^{(A_\Theta)}\bigr)
    \Bigr],
\end{align*}
where $d\Sigma_{k+1}^{(A_\Theta)} = (dA_\Theta)\Sigma_k A_\Theta^\top
+ A_\Theta\Sigma_k(dA_\Theta)^\top$.
Using trace identities and symmetry of $R$, the three contributions become:
\begin{align*}
    \Tr\bigl(dS(\Sigma_k+\mu_k\mu_k^\top)\bigr)
        &= \Tr\Bigl((\Sigma_k+\mu_k\mu_k^\top)\Theta R^{-1}d\Theta^\top\Bigr), \\
    \lambda_{k+1}^\top(dA_\Theta)\mu_k
        &= \tfrac{1}{2}\Tr\Bigl(\mu_k\lambda_{k+1}^\top BR^{-1}d\Theta^\top\Bigr), \\
    \Tr\bigl(\Lambda_{k+1}\,d\Sigma_{k+1}^{(A)}\bigr)
        &= \Tr\Bigl(\Sigma_k A_\Theta^\top \Lambda_{k+1} BR^{-1}d\Theta^\top\Bigr).
\end{align*}
Collecting terms gives $dJ = \sum_{k=0}^{N-1}\Tr(G_k^\top\,d\Theta)$ with
\begin{equation*}
    G_k = \Bigl[(\Sigma_k+\mu_k\mu_k^\top)\Theta
        + \bigl(\tfrac{1}{2}\mu_k\lambda_{k+1}^\top
        + \Sigma_k A_\Theta^\top \Lambda_{k+1} \bigr)B\Bigr]R^{-1},
\end{equation*}
and since $dJ = \Tr\bigl((\nabla_\Theta J)^\top d\Theta\bigr)$, the result follows.
\end{proof}

We illustrate the incentive optimization on an example with $n = 2$, $m = 1$,
\begin{equation*}
    A = \begin{bmatrix} 1 & 0.3 \\ 0 & 1 \end{bmatrix},\quad
    B = \begin{bmatrix} 0.5 \\ 1 \end{bmatrix},\quad
    Q = I_2,\quad R = 2,
\end{equation*}
$\bar x_0 = [0,\,0]^\top$, $\xref = [1,\,0]^\top$, and $\Sigma_0 = \mathbf{0}$.
Figure~\ref{fig:multi-dim} shows gradient descent convergence and the closed-loop trajectories under the obtained $\Theta$.
The double-integrator structure of the chosen dynamics yields a vanishing steady-state error.

\begin{figure}[t]
\vspace{6pt}
\centering
\begin{subfigure}{.419\linewidth}
  \centering
  \includegraphics[width=\linewidth]{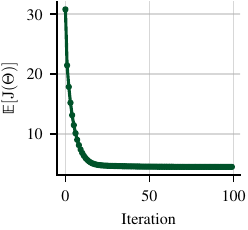}
  \label{fig:multi-dim-convergence}
\end{subfigure}%
\hspace{4pt}
\begin{subfigure}{0.552\linewidth}
  \centering
  \includegraphics[width=\linewidth]{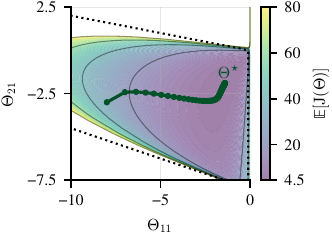}
  \label{fig:multi-dim-convergence-2d}
\end{subfigure}
\vspace{-11pt}
\begin{subfigure}{\linewidth}
  \centering
  \includegraphics[width=\linewidth]{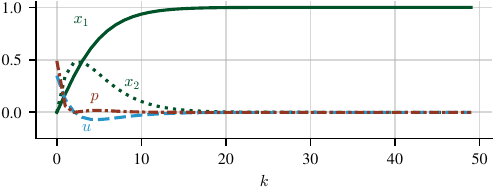}
  \label{fig:multi-dim-optimal}
\end{subfigure}
\caption{
    Top: Convergence of gradient descent on the expected leader cost (Left: Convergence of leader cost; Right: Convergence of incentive parameters).
    Bottom: Closed-loop trajectories of state $x$, input $u$, and incentive payment $p$, under the optimized incentive parameter $\Theta^\star$.
}
\label{fig:multi-dim}
\end{figure}

\subsection{The scalar case}
\label{sec:scalar}

To further illuminate the trade-offs inherent to the design of static incentive mechanisms, we specialize to the scalar setting $n = m = 1$: we first derive a closed-form expression for the leader's cost under general (finite) model parameters, before characterizing the optimal incentive in two asymptotic regimes of practical interest.
For simplicity, we assume $B>0$, although the results are easily extended to the $B<0$ case.
We define the following geometric sums:
\begin{gather}
    \alpha_1 = \frac{1 - A^N}{1 - A},\quad
    \alpha_2 = \frac{1 - A^{2N}}{1 - A^2},\nonumber\\
    \alpha_{\Theta,1} = \frac{1 - A_\Theta^N}{1 - A_\Theta},\quad
    \alpha_{\Theta,2} = \frac{1 - A_\Theta^{2N}}{1 - A_\Theta^2}.
    \label{eq:geometric-sums}
\end{gather}

\begin{proposition}[Closed-Form Leader Cost, Scalar Case]
\label{thm:scalar-cost}
    In the scalar case, the expected leader cost admits the closed form
    \begin{align}
        J(\Theta) = \biggl(&Q + \frac{\Theta^2}{2R}\biggr)   
        \biggl[\,(\Sigma_0 + \mu_0^2)\,\alpha_{\Theta,2} \nonumber\\
        &+ \frac{2(A-1)\xref \mu_0}{1 - A_\Theta}
            \bigl(\alpha_{\Theta,1} - \alpha_{\Theta,2}\bigr) \nonumber\\
        &+ \Bigl(\frac{(A-1)\xref}{1 - A_\Theta}\Bigr)^{\!2}
            \bigl(N - 2\alpha_{\Theta,1} + \alpha_{\Theta,2}\bigr)
        \biggr].
        \label{eq:scalar-cost}
    \end{align}
\end{proposition}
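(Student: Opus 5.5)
We start from the general expression \eqref{eq:J-multi-dim} specialized to $n=m=1$. There $S = Q + \Theta^2/(2R)$ is a scalar and can be pulled out of the sum, so that
\[
J(\Theta) = \Bigl(Q+\tfrac{\Theta^2}{2R}\Bigr)\sum_{k=0}^{N-1}\bigl(\Sigma_k + \mu_k^2\bigr).
\]
It then remains to evaluate $\sum_k \Sigma_k$ and $\sum_k \mu_k^2$ in closed form using the scalar recursions \eqref{eq:error-dynamics}. We take $A_\Theta\neq 1$ and $A_\Theta^2\neq 1$ so that the geometric sums \eqref{eq:geometric-sums} are well defined. The degenerate cases follow by continuity, since each $\alpha$ is a polynomial in $A_\Theta$.

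For the covariance, the recursion gives $\Sigma_k = A_\Theta^{2k}\Sigma_0$, and hence $\sum_{k=0}^{N-1}\Sigma_k = \Sigma_0\,\alpha_{\Theta,2}$.

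For the mean, write $g=(A-1)\xref$. We unroll the affine recursion as in the text preceding Lemma~\ref{lem:stability}, but in the shifted form around the fixed point $e^\star = g/(1-A_\Theta)$:
\[
\mu_k = A_\Theta^k\mu_0 + g\,\frac{1-A_\Theta^k}{1-A_\Theta} = A_\Theta^k \mu_0 + c\,(1-A_\Theta^k), \qquad c := \frac{g}{1-A_\Theta}.
\]
Squaring gives three terms:
\[
\mu_k^2 = \mu_0^2 A_\Theta^{2k} + 2c\mu_0\bigl(A_\Theta^k - A_\Theta^{2k}\bigr) + c^2\bigl(1 - 2A_\Theta^k + A_\Theta^{2k}\bigr).
\]
Summing over $k=0,\dots,N-1$ and using $\sum_k A_\Theta^k=\alpha_{\Theta,1}$ and $\sum_k A_\Theta^{2k}=\alpha_{\Theta,2}$ yields
\[
\sum_{k=0}^{N-1}\mu_k^2 = \mu_0^2\alpha_{\Theta,2} + 2c\mu_0(\alpha_{\Theta,1}-\alpha_{\Theta,2}) + c^2\bigl(N-2\alpha_{\Theta,1}+\alpha_{\Theta,2}\bigr).
\]
Adding $\Sigma_0\alpha_{\Theta,2}$, combining the $\alpha_{\Theta,2}$ terms into $(\Sigma_0+\mu_0^2)\alpha_{\Theta,2}$, and substituting $c=(A-1)\xref/(1-A_\Theta)$ reproduces exactly the bracket in \eqref{eq:scalar-cost}.

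No step is a real obstacle. The only care needed is in writing $\mu_k$ relative to the fixed point $c$ so that the cross term organizes as $\alpha_{\Theta,1}-\alpha_{\Theta,2}$. The other point to handle is the degenerate values $A_\Theta=\pm1$, which we dispatch by the continuity remark above.
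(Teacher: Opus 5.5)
Your proof is correct and is essentially the paper's argument. The paper expands $\E[e_k^2]$ directly using $\E[e_0^2]=\Sigma_0+\mu_0^2$, while you first split it as $\Sigma_k+\mu_k^2$ via \eqref{eq:J-multi-dim}; this is the same computation. One minor point: the continuity remark at $A_\Theta=1$ needs more than the $\alpha$'s being polynomials, because of the $1/(1-A_\Theta)$ factors. It does hold, since $\alpha_{\Theta,1}-\alpha_{\Theta,2}=\sum_k A_\Theta^k(1-A_\Theta^k)$ and $N-2\alpha_{\Theta,1}+\alpha_{\Theta,2}=\sum_k(1-A_\Theta^k)^2$ vanish to first and second order there.
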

\begin{proof}
    Note that $e_k = A_\Theta^k e_0 + g\,\frac{1 - A_\Theta^k}{1 - A_\Theta}$ by induction.
    Substituting into the cost and pulling the scalar factor outside the sum gives
    \begin{equation*}
        J(\Theta)
        = \Bigl(Q+\frac{\Theta^2}{2R}\Bigr) \sum_{k=0}^{N-1}
            \E\biggl[\biggl(A_\Theta^k e_0
                + \frac{g\,(1-A_\Theta^k)}{1-A_\Theta}\biggr)^{\!2}\biggr].
    \end{equation*}
    Expanding the square and applying linearity of expectation yields
    \begin{align*}
        J(\Theta)
        = \Bigl(Q+\frac{\Theta^2}{2R}\Bigr) \sum_{k=0}^{N-1} &\Bigl[
            \E[e_0^2]\, A_\Theta^{2k} \\
            &+ \frac{2g}{1-A_\Theta}\,\E[e_0]
                \bigl(A_\Theta^{k} - A_\Theta^{2k}\bigr) \\
            &+ \frac{g^2}{(1-A_\Theta)^2}
                \bigl(1 - 2A_\Theta^k + A_\Theta^{2k}\bigr)
        \Bigr].
    \end{align*}
    Substituting $g = (A-1)\xref$, $\E[e_0] = \mu_0$,
    $\E[e_0^2] = \Sigma_0 + \mu_0^2$, and evaluating the geometric
    sums gives~\eqref{eq:scalar-cost}.
\end{proof}
With this analytical expression at hand, we can examine two asymptotic regimes that yield interpretable characterizations of the optimal incentive.

\subsubsection{Infinite-Horizon Limit ($N \to \infty$)}

Assume in the following that $A_\Theta$ is Schur stable, so that the closed-loop trajectory converges.
We exclude the trivial case $\xref=0$ and distinguish the cases $A\neq$ and $A=1$, leading to a non-zero and zero steady-state error, respectively.
First, if $A\neq 1$, the reference is not an equilibrium of the open-loop dynamics and the follower's best response \eqref{eq:follower-input} does not eliminate the steady-state error. Thus, the leader cost \eqref{eq:scalar-cost} diverges as $N \to \infty$.
The steady-state contribution dominates, and the average cost per stage converges to the stage cost in steady state:
\begin{equation}
    \lim_{N \to \infty} \frac{1}{N}\,J(\Theta) = J_\text{ss}(\Theta)
    = \Bigl(Q + \frac{\Theta^2}{2R}\Bigr)
      \Bigl(\frac{(A-1)\xref}{1 - A_\Theta}\Bigr)^{\!2}.
    \label{eq:avg-cost}
\end{equation}
Minimizing \eqref{eq:avg-cost} over $\Theta$ yields the following result.
\begin{proposition}[Scalar Infinite-Horizon Optimal Incentive Parameter, $A\neq1$]
\label{prop:Ninf-with-ss-error}
    If $A\neq1$, the minimizer of the average stage cost in the infinite-horizon limit is equal to the minimizer of the steady-state cost, and it is given by
    \begin{equation}
        \begin{cases}
            \dfrac{QB}{A-1}, &
            \text{if }
            \dfrac{QB^2}{2R(A-1)} + A
            \in (-1,1),\
            \\[8pt]
            \dfrac{-2R(1+A)}{B}, & \text{otherwise.}
        \end{cases}
        \label{eq:theta-opt-Ninf}
    \end{equation}
\end{proposition}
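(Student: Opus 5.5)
The plan is to first reduce the infinite-horizon problem to minimizing $J_\text{ss}$ over the stability region, and then do an elementary one-dimensional analysis of $J_\text{ss}$. For the reduction, I would use the closed form of Proposition~\ref{thm:scalar-cost}. If $|A_\Theta|<1$, then $\alpha_{\Theta,1}$ and $\alpha_{\Theta,2}$ stay bounded as $N\to\infty$. Dividing \eqref{eq:scalar-cost} by $N$ therefore leaves only the term proportional to $N$, which is \eqref{eq:avg-cost}. If $|A_\Theta|>1$, the geometric sums grow exponentially, so $J(\Theta)/N\to\infty$; because $g=(A-1)\xref\neq0$, these terms do not cancel. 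Hence, by Lemma~\ref{lem:stability}, the minimizer of the limiting average cost is the minimizer of $J_\text{ss}$ over the stable set $\{\Theta: A_\Theta\in(-1,1)\}$. Since $B>0$, this set is the open interval $\Theta\in\bigl(-2R(1+A)/B,\ 2R(1-A)/B\bigr)$.

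Next, write $D(\Theta)=1-A_\Theta=1-A-\tfrac{B\Theta}{2R}$ and $f(\Theta)=(Q+\Theta^2/2R)/D^2$, so that $J_\text{ss}=g^2 f$ and $g^2>0$ is a constant. Setting $f'=0$ and clearing $D^3$ gives $\Theta D+(Q+\Theta^2/2R)B=0$. The quadratic terms cancel, leaving the linear equation $\Theta(1-A)+QB=0$. So $f$ has exactly one critical point on $\mathbb{R}\setminus\{\Theta_p\}$, namely $\Theta^\star=QB/(A-1)$, where $\Theta_p$ is the pole at which $A_\Theta=1$. Substituting into \eqref{eq:A-Theta-def} gives $A_{\Theta^\star}=A+\tfrac{QB^2}{2R(A-1)}$, which is exactly the quantity appearing in the case condition of \eqref{eq:theta-opt-Ninf}.

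For the global structure, note that $f\to+\infty$ at the pole $\Theta_p$ from either side, and $f\to 2R/B^2$ as $\Theta\to\pm\infty$. A direct evaluation gives $f(\Theta^\star)=2RQ/\bigl(2R(1-A)^2+QB^2\bigr)<2R/B^2$. So $\Theta^\star$ is the global minimum on its half-line, $f$ is monotone on each side of $\Theta^\star$ within that half-line, and $f$ is monotone on the other half-line. The stable interval lies on the side $\Theta<\Theta_p$, with its right endpoint at the pole. If $A_{\Theta^\star}\in(-1,1)$, then $\Theta^\star$ is the minimizer. Otherwise there are two sub-cases. Either $\Theta^\star$ lies in $A_\Theta<-1$ on the same half-line, so that $f$ decreases from the pole down to $\Theta^\star$. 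Or $\Theta^\star$ lies on the other half-line, so that $f$ decreases monotonically from $+\infty$ toward $2R/B^2$ as $\Theta\to-\infty$. In both sub-cases $f$ is increasing in $\Theta$ across the stable interval, so the optimum is pushed to the endpoint $A_\Theta=-1$, that is, $\Theta=-2R(1+A)/B$.

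The main obstacle I expect is this boundary case. The endpoint $A_\Theta=-1$ is only marginally stable, so it is an infimum rather than an attained minimum over the open interval, and $J(\Theta)/N$ at that endpoint must be checked directly. I would handle it by noting that at $A_\Theta=-1$ one still has $|\alpha_{\Theta,1}|\le1$ and $\alpha_{\Theta,2}=N$. Hence $J/N$ converges to a finite value, which is $J_\text{ss}$ plus a term involving $\Sigma_0+\mu_0^2-2g\mu_0/(1-A_\Theta)+g^2/(1-A_\Theta)^2$. I would then either interpret the statement in the closure sense (the optimal parameter as the limit point of minimizing sequences) or verify continuity of the limit at that endpoint, stating explicitly which interpretation is used.
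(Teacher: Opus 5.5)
Your core computation is the paper's: the same derivative, the same cancellation of the quadratic terms to $\Theta(1-A)+QB$, the same unique critical point, and the same boundary conclusion. You differ in the monotonicity step. The paper reads the sign of $J_{\mathrm{ss}}'$ on $\mathcal S$ directly from the affine numerator, since the factor $(1-A_\Theta)^3$ is positive there, and never looks outside $\mathcal S$. You instead use the global shape of $f$ on both sides of the pole: $+\infty$ at $\Theta_p$, $2R/B^2$ at infinity, and $f(\Theta^\star)<2R/B^2$.

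The paper's route is shorter. Yours makes explicit the case split the paper glosses over: $\Theta^\star\in\mathcal S$ forces $A<1$, while for $A>1$ the numerator has negative slope and $\Theta^\star$ is a minimum only on the far side of the pole. (The paper's ``unbounded above, hence not a local maximum'' is not a valid inference on its own.) You also supply the reduction to $J_{\mathrm{ss}}$ and a discussion of the endpoint; the paper bypasses both by assuming Schur stability and simply asserting the average-cost limit.

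Two details need fixing.

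\emph{Unstable $\Theta$.} Your reduction overreaches here. Write $e_k=A_\Theta^k(e_0-e^\star)+e^\star$ with $e^\star=g/(1-A_\Theta)$. The exponentially growing part of $\mathbb{E}[e_k^2]$ then has coefficient $\Sigma_0+(\mu_0-e^\star)^2$, and $g\neq0$ plays no role in whether it vanishes. Suppose $\Sigma_0=0$ and $\mu_0=e^\star(\Theta)$ for some $|A_\Theta|>1$. Then the trajectory sits at its equilibrium and $J/N=J_{\mathrm{ss}}(\Theta)$ exactly. Taking this $\Theta$ to be $\Theta^\star$ with $|A_{\Theta^\star}|>1$ undercuts every stable parameter and contradicts the ``otherwise'' branch. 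So either keep the paper's standing restriction to Schur-stable $\Theta$, or assume $\Sigma_0>0$.

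\emph{The endpoint.} Only your closure/infimum reading works. Your own computation gives $\lim_N J/N=J_{\mathrm{ss}}+\bigl(Q+\frac{\Theta^2}{2R}\bigr)\bigl(\Sigma_0+(\mu_0-e^\star)^2\bigr)$ at $A_\Theta=-1$. This generically exceeds the one-sided limit of $J_{\mathrm{ss}}$, so the limiting average cost jumps up there and the continuity option fails. The ``otherwise'' value is therefore an infimum over $\mathcal S$ that is not attained. This is also how the paper's ``attained at the left boundary'' should be read.
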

\begin{proof}
    For the closed-loop system to be stable and the steady state to exist, we need
    \begin{equation*}
        |A_\Theta| < 1
        \quad\Longleftrightarrow\quad
        \Theta \in \Bigl(\tfrac{-2R(1+A)}{B},\,\tfrac{2R(1-A)}{B}\Bigr) =: \mathcal{S}.
    \end{equation*}
    We minimize the steady-state cost $J_{\mathrm{ss}}(\Theta)$ over $\mathcal{S}$.
    $J_{\mathrm{ss}}(\Theta)$ is continuously differentiable on $\mathcal{S}$ and the derivative is
    \begin{align*}
    \dfrac{d}{d\Theta} J_{\mathrm{ss}}(\Theta)
    &= (A-1)^2(\xref)^2
    \frac{\Theta(1-A_\Theta)+B\bigl(Q+\frac{\Theta^2}{2R}\bigr)}{R(1-A_\Theta)^3}.
    \end{align*}
    Since $(A-1)\xref \neq 0$, critical points are characterized by the numerator:
    \begin{align*}
    0
    &= \Theta(1-A_\Theta) + B\Bigl(Q+\frac{\Theta^2}{2R}\Bigr) \\
    &= \Theta(1-A) + BQ.
    \end{align*}
    Thus, there exists a unique critical point
    \[
    \Theta^\circ = \frac{BQ}{A-1}.
    \]
    Moreover, $J_{\mathrm{ss}}(\Theta) \to \infty$ as $\Theta \nearrow \frac{2R(1-A)}{B}$,
    so $J_{\mathrm{ss}}$ is not bounded above and the critical point cannot be a local maximum.
    
    Since $(A-1)\xref \neq 0$, the sign of $\frac{d}{d\Theta}J_{\mathrm{ss}}(\Theta)$ on $\mathcal{S}$ is determined solely by the numerator $\Theta(1-A)+BQ$, which is affine in $\Theta$ with slope $1-A \neq 0$.
    Thus, $\Theta^\circ$ must be a local minimum.
    If $\Theta^\circ \in \mathcal{S}$, then $J_{\mathrm{ss}}$ is decreasing before $\Theta^\circ$ and increasing after, until the right boundary of $\mathcal{S}$. Thus, $\Theta^\circ$ is the unique global minimizer on $\mathcal{S}$.
    If $\Theta^\circ \notin \mathcal{S}$, then $J_{\mathrm{ss}}$ is strictly increasing on $\mathcal{S}$ and the minimum must be attained at the left boundary $\Theta = \tfrac{-2R(1+A)}{B}$.
    Finally, the condition $\Theta^\circ \in \mathcal{S}$ is equivalent to $\frac{QB^2}{2R(A-1)} + A \in (-1,1)$.
    Altogether, this yields \eqref{eq:theta-opt-Ninf}.
\end{proof}

The magnitude of the optimal incentive \eqref{eq:theta-opt-Ninf} increases proportionally with $B$ (follower has larger influence on system dynamics) and $Q$ (tracking error is more strongly penalized).
Notably, the optimal incentive does not depend on $R$, except when $R$ determines the closed-loop stability.
This invariance is a consequence of a balancing of two competing effects: larger $\Theta$ elicits greater follower effort (reducing steady-state error) but also increases marginal incentive payments.
As $R$ scales, both effects scale proportionally, leaving the optimum unchanged.
This insight can be useful since $R$ encodes the follower's private cost of effort which is information the leader typically can only estimate.
Proposition \ref{prop:Ninf-with-ss-error} reveals that the optimal long-horizon incentive can be designed without requiring knowledge of $R$.

Now, if $A=1$, the steady-state error is zero and the leader cost converges.
The optimal $\Theta$ thus minimizes the cost of the transient.
The leader cost simplifies to
\begin{equation}
\label{eq:J_for_Ninf_no-ss-error}
\lim_{N \to \infty} J(\Theta) =  \Bigl(Q + \frac{\Theta^2}{2R}\Bigr)
        \frac{\Sigma_0 + \mu_0^2}{1-A_\Theta^2}.
\end{equation}
\begin{proposition}[Scalar Infinite-Horizon Optimal Incentive Parameter, $A=1$]
\label{prop:Ninf-no-ss-error}
    If $A=1$, the minimizer of the infinite-horizon cost is
    \begin{equation}
        \dfrac{BQ}{2} - \sqrt{\frac{B^2Q^2}{4} + 2QR}\ .
        \label{eq:theta-opt-Ninf-no-ss-error}
    \end{equation}
\end{proposition}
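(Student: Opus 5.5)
We minimize the limit cost \eqref{eq:J_for_Ninf_no-ss-error} over the stability interval, following the template of the proof of Proposition~\ref{prop:Ninf-with-ss-error}. With $A=1$ we have $A_\Theta = 1 + \tfrac{B\Theta}{2R}$. Since $B>0$, Schur stability $|A_\Theta|<1$ is equivalent to
\[
\Theta \in \mathcal{S} := \Bigl(-\tfrac{4R}{B},\,0\Bigr).
\]
We assume $\Sigma_0+\mu_0^2>0$; otherwise the cost vanishes identically. Then the positive constant $\Sigma_0+\mu_0^2$ can be dropped, and we minimize
\[
f(\Theta) = \frac{Q + \Theta^2/(2R)}{1-A_\Theta^2}.
\]

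First we simplify the denominator:
\[
1-A_\Theta^2 = -\tfrac{B\Theta}{R}\Bigl(1+\tfrac{B\Theta}{4R}\Bigr) = -\tfrac{B\Theta}{4R^2}\,(4R+B\Theta).
\]
This is positive on $\mathcal{S}$ and vanishes at both endpoints. Since the numerator is at least $Q>0$, we get $f\to+\infty$ at both ends of $\mathcal{S}$. As $f$ is smooth on $\mathcal{S}$, a global minimizer exists in the interior and is a critical point.

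Next we compute $f'$ by the quotient rule and keep only its numerator. Write $N(\Theta)=Q+\Theta^2/(2R)$ and $D(\Theta)=-\tfrac{B}{4R^2}(4R\Theta+B\Theta^2)$. The critical-point condition $N'D - ND'=0$ reads
\[
\tfrac{\Theta}{R}\Bigl(-\tfrac{B}{4R^2}\Bigr)(4R\Theta+B\Theta^2) - \Bigl(Q+\tfrac{\Theta^2}{2R}\Bigr)\Bigl(-\tfrac{B}{4R^2}\Bigr)(4R+2B\Theta)=0.
\]
Dividing by $-B/(4R^2)$ gives
\[
\tfrac{\Theta}{R}(4R\Theta+B\Theta^2) - \Bigl(Q+\tfrac{\Theta^2}{2R}\Bigr)(4R+2B\Theta)=0.
\]
Expanding, the cubic terms $B\Theta^3/R$ cancel, and the equation reduces to
\[
2\Theta^2 - 2BQ\Theta - 4QR = 0,\qquad\text{i.e.}\qquad \Theta^2 - BQ\Theta - 2QR = 0.
\]
Its roots are
\[
\Theta_\pm = \tfrac{BQ}{2} \pm \sqrt{\tfrac{B^2Q^2}{4}+2QR}.
\]
The root $\Theta_+$ is positive, so it lies outside $\mathcal{S}$. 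The root $\Theta_-$ is negative because the square root exceeds $BQ/2$.

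It remains to check $\Theta_- > -4R/B$. Since $\mathcal{S}$ must contain at least one critical point (the interior minimizer found above), and $\Theta_-$ is the only candidate, it must lie in $\mathcal{S}$ and be the unique global minimizer. This existence argument avoids a direct inequality check. For a direct confirmation, note that $\Theta^2 - BQ\Theta - 2QR$ evaluated at $\Theta=-4R/B$ equals $16R^2/B^2 + 2QR>0$, while at $\Theta=0$ it equals $-2QR<0$. Hence exactly one root lies in $(-4R/B,0)$. This yields \eqref{eq:theta-opt-Ninf-no-ss-error}.

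The only step needing care is the algebra of the critical-point equation, specifically confirming that the cubic terms cancel. The rest follows from the boundary blow-up argument.
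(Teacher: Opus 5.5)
Your proof is correct and follows the paper's route: the same stability interval $(-4R/B,0)$, the same first-order condition that reduces (after the cubic terms cancel) to $\Theta^2 - BQ\Theta - 2QR = 0$, and the same selection of the negative root. Your endpoint analysis is actually the sounder of the two. The paper asserts the cost is finite at $\Theta=-4R/B$, but there $A_\Theta=-1$, so $1-A_\Theta^2=0$ and the cost diverges at both ends, exactly as you show; this blow-up, together with uniqueness of the interior critical point, is what justifies global minimality.
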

\begin{proof}
    Because of $A = 1$, we have $A_\Theta = 1 + B\Theta/2R$, so stability requires $|A_\Theta| < 1$, i.e.\ $\Theta \in \mathcal{S} = \bigl(\frac{-4R}{B}, 0\bigr)$.
    The first order condition reads
    \begin{equation*}
        0 = \frac{\Theta(1-A_\Theta^2)+B\bigl(Q+\frac{\Theta^2}{2R}\bigr)\bigl(A+\frac{B\Theta}{2R}\bigr)}{R(1-A_\Theta^2)^2}
    \end{equation*}
    which, upon substituting $A = 1$, simplifies to the quadratic
    \begin{equation*}
        \Theta^2 - BQ\,\Theta - 2QR = 0.
    \end{equation*}
    The two roots are $\frac{BQ}{2} \pm \sqrt{\frac{B^2Q^2}{4} + 2QR}$.
    Since $Q, R > 0$, the square root strictly exceeds $\frac{BQ}{2}$, so the positive root lies outside $\mathcal{S}$. The negative root
    \begin{equation*}
        \Theta^\circ = \frac{BQ}{2} - \sqrt{\frac{B^2Q^2}{4} + 2QR}
    \end{equation*}
    lies in $\mathcal{S}$: it is negative by the above, and is larger than $-4R/B$ since $\sqrt{\frac{B^2Q^2}{4}+2QR} < \frac{BQ}{2} + \frac{4R}{B}$, which holds after squaring (using $B, Q, R > 0$). Hence $\Theta^\circ$ is the unique critical point in $\mathcal{S}$. Since the cost diverges as $\Theta \nearrow 0$ and is finite at the left boundary, the same monotonicity argument as in Proposition~\ref{prop:Ninf-with-ss-error} implies $\Theta^\circ$ is the global minimizer on $\mathcal{S}$, yielding \eqref{eq:theta-opt-Ninf-no-ss-error}.
\end{proof}
Because of the zero steady-state error the infinite-horizon leader cost depends on the transient and unlike before, is not invariant to any problem parameter.
For large $B$, the optimal $\Theta$ converges to $0$. For increasing $Q$, the optimal incentive decreases but converges to the finite value $-2R/B$. For increasing $R$, the optimal incentive decreases without bounds like $-\sqrt{R}$.

\subsubsection{High Follower Cost ($R \to \infty$)}

We next consider the regime in which the follower's willingness to provide control effort diminishes because their cost for inputs grows.

\begin{proposition}[Optimal Incentive, $R \to \infty$]
\label{prop:Rinf}
    The limiting minimizer
    $\Theta^\star_{R\to\infty} = \lim_{R\to\infty}
    \argmin_{\Theta} J(\Theta)$
    is finite and given by
    \begin{equation}
        \Theta^\star_{R\to\infty} = -\frac{QB}{2}\,\frac{\Gamma'(A)}{\Gamma(A)},
    \end{equation}
    with
    \begin{align*}
        \Gamma(A) &= (\Sigma_0 + \bar x_0^2)\,\alpha_2 - 2\xref \bar x_0\,\alpha_1 + N\,(\xref)^2\\
        \Gamma'(A) &= 
            (\Sigma_0 + \bar x_0^2)\,\frac{d\alpha_2}{dA}
            - 2\xref \bar x_0\,\frac{d\alpha_1}{dA} \\
            &\quad + \frac{2\xref}{1-A}
            \Bigl[\xref(N - \alpha_1)
            - \bar x_0(\alpha_1 - \alpha_2)\Bigr],
    \end{align*}
\end{proposition}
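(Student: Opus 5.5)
The approach is a perturbation expansion in the small parameter $\varepsilon := \tfrac{1}{2R}$. For fixed $\Theta$, the closed-loop coefficient $A_\Theta = A + \varepsilon B\Theta$ tends to $A$ and the weight $S = Q+\varepsilon\Theta^2$ tends to $Q$. The leader cost therefore tends to $Q$ times the open-loop cost, which does not depend on $\Theta$. The minimizer is then selected by the first-order term in $\varepsilon$. To make this precise, write the closed form of Proposition~\ref{thm:scalar-cost} as
\[
J(\Theta) = \bigl(Q+\varepsilon\Theta^2\bigr)\,F\bigl(A+\varepsilon B\Theta\bigr),
\qquad F(a) := \sum_{k=0}^{N-1}\E\bigl[e_k(a)^2\bigr],
\]
where $e_{k+1}(a) = a\,e_k(a) + g$, the offset $g=(A-1)\xref$ is held fixed, and $e_0 = x_0-\xref$. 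The function $F$ is a polynomial in $a$, because the geometric sums have removable singularities. A Taylor expansion then gives
\[
J(\Theta) = Q F(A) + \varepsilon\bigl[\Theta^2 F(A) + QB\,\Theta\,F'(A)\bigr] + O(\varepsilon^2),
\]
with the $O(\varepsilon^2)$ term uniform for $\Theta$ in compact sets. The bracket is a strictly convex quadratic in $\Theta$ provided $F(A)>0$, which holds since $Q\succ0$ and the trajectory is not identically zero (we exclude $\xref=0$). Its minimizer is $\Theta = -\tfrac{QB}{2}F'(A)/F(A)$.

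Next I identify $F(A)=\Gamma(A)$ and $F'(A)=\Gamma'(A)$. At $a=A$ the recursion is the open-loop one, so $e_k = A^k x_0 - \xref$. This gives $\E[e_k^2] = (\Sigma_0+\bar x_0^2)A^{2k} - 2\xref\bar x_0A^k + (\xref)^2$, and summing over $k$ yields $\Gamma(A)$.

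The derivative $F'(A)$ is a partial derivative in $a$ with $g$ frozen, so it is not simply $d\Gamma/dA$. I will use the chain rule $d\Gamma/dA = \partial_a F + \xref\,\partial_g F$, evaluated at $a=A$, so that $\partial_aF = d\Gamma/dA - \xref\,\partial_gF$. Since $\partial e_k/\partial g = (1-A^k)/(1-A)$, we get $\partial_gF = \sum_k 2\E[e_k](1-A^k)/(1-A)$. Substituting $\E[e_k] = A^k\bar x_0 - \xref$ and summing produces exactly the bracket $\tfrac{2\xref}{1-A}[\xref(N-\alpha_1) - \bar x_0(\alpha_1-\alpha_2)]$ with the correct sign. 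For $A=1$ the expression is read by continuity. This part is routine bookkeeping.

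The main obstacle is justifying that $\lim_{R\to\infty}\argmin J$ equals the minimizer of the first-order term. The difficulty is that the expansion is only uniform on compacts, while the minimizers might escape to infinity, for instance with $\Theta\sim R$. To handle this, I substitute $\Theta = 2R\varphi$, so that $J = (Q+2R\varphi^2)F(A+B\varphi)$.

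If $|\varphi|\ge\delta$, then $J\ge 2R\delta^2\min_a F(a)$. I expect $\min_a F(a)>0$ because $F$ is a nonnegative polynomial that is positive generically, so this lower bound blows up as $R\to\infty$. It therefore exceeds $J(0)=QF(A)$, and such $\varphi$ cannot be minimizers; hence minimizers satisfy $\varphi\to0$.

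In the remaining range $\Theta = o(R)$, I will expand $F(A+\varepsilon B\Theta)$ to second order with the remainder controlled by $\varepsilon^2\Theta^2\sup|F''|$. This shows that $J(\Theta) - QF(A) \ge \varepsilon\bigl[(1-o(1))\Theta^2F(A) - C|\Theta|\bigr]$, which forces $\Theta$ to stay bounded. Once $\Theta$ is bounded, the uniform expansion shows that the rescaled costs $(J-QF(A))/\varepsilon$ converge uniformly on compacts to a strictly convex quadratic. Standard argmin-continuity then yields the limit $\Theta^\star_{R\to\infty}$, and in particular shows that it is finite.
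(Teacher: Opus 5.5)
Your proposal is correct and follows the same route as the paper. Both expand to first order in $1/R$ about $A_\Theta=A$ and then minimize the resulting quadratic in $\Theta$. The paper expands $\log J$ rather than $J$, which is equivalent at this order. It obtains $\Gamma'(A)$ by differentiating $c(a)=(A-1)\xref/(1-a)$ with the offset $g$ frozen, which is exactly your $\partial_aF$. Your identity $\partial_aF=d\Gamma/dA-\xref\,\partial_gF$ reaches the same bracket with less bookkeeping.

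The genuine addition is your localization argument: the rescaling $\Theta=2R\varphi$, followed by the estimate that forces minimizers to stay bounded. The paper simply asserts that the bracket determines the minimizer, but its $O(R^{-2})$ remainder is uniform only on compact sets. Your version therefore closes a step the paper skips.

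Two justifications need repair.

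First, ``nonnegative and positive generically'' does not imply $\min_aF(a)>0$; consider $(a-1)^2$. The right reason is that the $k=0$ summand $\E[e_0^2]=\Sigma_0+\mu_0^2$ does not depend on $a$. If it vanishes, then $e_0=0$ a.s., so the $k=1$ summand (for $N\ge2$) equals $g^2$ for every $a$. Hence $\inf_aF(a)>0$ exactly when $F(A)=\Gamma(A)>0$.

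Second, $F(A)>0$ follows neither from $Q\succ0$, which does not enter $F$, nor from $\xref\neq0$. For $A=1$ and $x_0=\xref$ a.s.\ one gets $F\equiv0$, hence $J\equiv0$. Then every $\Theta$ is a minimizer and the stated formula reads $0/0$. So $\Gamma(A)>0$ is a nondegeneracy hypothesis, which the paper also uses tacitly (``since $J>0$''). State it explicitly, and your argument is complete.
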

\begin{proof}
Since $J>0$, minimizing $J(\Theta)=\bigl(Q+\frac{\Theta^2}{2R}\bigr)\Gamma(A_\Theta)$ over $\Theta$ is equivalent to minimizing $\log J$. Taylor-expanding around $A_\Theta=A$ for large $R$ and using $\log(1+x)=x+O(x^2)$ gives
\begin{align*}
    \log J(\Theta)
    = &\log Q + \log\Gamma(A)\, + \\
    &\frac{1}{R}\left[
        \frac{\Theta^2}{2Q}
        + \frac{B\,\Gamma'(A)}{2\,\Gamma(A)}\,\Theta
      \right] + O(R^{-2}).    
\end{align*}
Since the $O(1)$ term does not depend on $\Theta$ and $R^{-1}>0$, the minimizer is asymptotically determined by the quadratic in brackets, whose unique minimum is $\Theta^\star_{R\to\infty} = -\frac{QB}{2}\frac{\Gamma'(A)}{\Gamma(A)}$.

It remains to verify the formulas for $\Gamma(A)$ and $\Gamma'(A)$.
Write $c(a):=\frac{(A-1)\xref}{1-a}$, so that
\begin{align*}
    \Gamma(a) = &(\Sigma_0+\mu_0^2)\,\alpha_{a,2}
              + 2c(a)\mu_0(\alpha_{a,1}-\alpha_{a,2})\\
              &+ c(a)^2(N-2\alpha_{a,1}+\alpha_{a,2}),    
\end{align*}
where $\alpha_{a,1},\alpha_{a,2}$ are functions of $a$.
Evaluating at $a=A$ gives $c(A)=-\xref$, $\alpha_{a,1}=\alpha_1$, and $\alpha_{a,2}=\alpha_2$, which yields the stated $\Gamma(A)$.
Differentiating with respect to $a$ via the product rule yields ($a$ arguments omitted for brevity):
\begin{align*}
    \Gamma'(A)
    &= (\Sigma_0+\mu_0^2)\,\alpha_2'
     + 2c'\mu_0(\alpha_1-\alpha_2)
     + 2c\mu_0(\alpha_1'-\alpha_2') \\
    &\quad + 2cc'(N-2\alpha_1+\alpha_2)
     + c^2(-2\alpha_1'+\alpha_2').
\end{align*}
Evaluating at $a=A$, using $c(A)=-\xref$ and $c'(A)=-\frac{\xref}{1-A}$, and collecting coefficients yields the stated $\Gamma'(A)$.
\end{proof}
The finiteness of the limiting minimizer implies that increasingly ``expensive'' followers should not be incentivized without bound.
In settings with asymmetric information where the follower’s cost parameter can only be estimated, this result provides an additional guardrail for designing the incentive function.

Finally, we present another example to numerically corroborate the results for the scalar case.
We choose $N = 10$, $A = 0.4$, $B = 1$, $Q = 1$, $R = 1$, $\xref = 1$, $\bar x_0 = 0$, and $\Sigma_0 = 0.1$; note that this corresponds to a nonzero error mean $\mu_0 = \bar x_0 - \xref = -1$.
Since $(A-1)\xref\neq0$, the discussion following Lemma~\ref{lem:stability} applies: even under the cost-optimal incentive, the state converges to a steady state distinct from $\xref$, as the leader's cost also accounts for incentive payments and therefore does not warrant fully eliminating the tracking error.
Figure~\ref{fig:scalar} and its caption illustrate the results.

\begin{figure}[t]
\centering
\begin{subfigure}{\linewidth}
  \centering
  \caption{}
  \includegraphics[width=\linewidth]{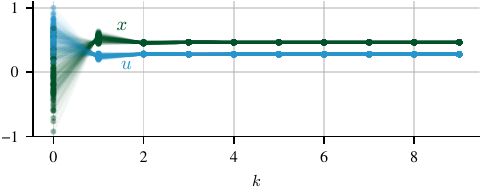}
  \label{fig:scalar-trajectories}
  \vspace{-15pt}
\end{subfigure}
\begin{subfigure}{.47\linewidth}
  \centering
  \caption{}
  \includegraphics[width=\linewidth]{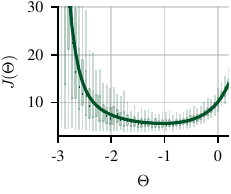}
  \label{fig:scalar-leader-cost}
  \vspace{-10pt}
\end{subfigure}%
\hspace{5pt}
\begin{subfigure}{.48\linewidth}
  \centering
  \caption{}
  \includegraphics[width=\linewidth]{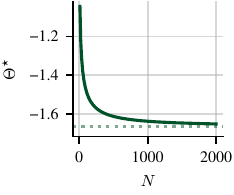}
  \label{fig:scalar-N-limit}
  \vspace{-10pt}
\end{subfigure}
\begin{subfigure}{.5\linewidth}
  \centering
  \caption{}
  \includegraphics[width=\linewidth]{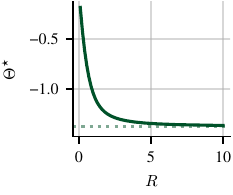}
  \label{fig:scalar-R-limit}
  \vspace{-10pt}
\end{subfigure}%
\hspace{5pt}
\begin{subfigure}{.45\linewidth}
  \centering
  \caption{}
  \includegraphics[width=\linewidth,trim={.4cm 0 0 0},clip]{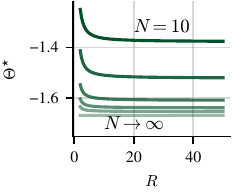}
  \label{fig:scalar-R-limit-different N}
  \vspace{-10pt}
\end{subfigure}
\caption{
    Scalar case analysis.
    (a) State and input trajectories under optimal incentive for 100 sampled initial conditions; the state converges to a steady state distinct from $\xref$ according to Lemma~\ref{lem:stability}.
    (b) Expected leader cost as function of $\Theta$.
    (c)--(d) Behavior of the optimal incentive parameter as functions
    of the horizon and follower cost, respectively, with
    asymptotes consistent with Propositions~\ref{prop:Ninf-with-ss-error} and~\ref{prop:Rinf}.
    (e) Dependence of optimal $\Theta$ on $R$ for increasing horizon lengths $N$. The dependence on $R$ vanishes as $N\to\infty$.
}
\label{fig:scalar}
\end{figure}

\section{Conclusion}
\label{sec:conclusion}

This paper has introduced a bi-level optimal control framework for the design of incentive mechanisms, in which a leader parametrizes a fixed incentive function to elicit favorable closed-loop behavior from a self-interested, myopic follower. 
The framework is motivated by practical settings, such as ancillary service procurement in power systems, where incentive structures are codified in long-term contracts and cannot be adapted at the timescale of system dynamics and disturbances.

For the linear-quadratic case with bilinear incentives, we have derived tractable expressions for the expected cost of the leader and its gradient, enabling locally optimal design of the incentive function by the leader.
Analysis of two asymptotic regimes yields explicit optimal incentives and provides theoretical guardrails for designing the incentive under asymmetric information.
In particular, the optimal incentive in the infinite-horizon limit is independent of the follower's cost parameter, and the optimal incentive for extremely hesitant followers is finite.

Several directions are promising for future work.

\paragraph{Multi-follower settings and collusion}
The present formulation considers a single follower.
An important generalization involves multiple followers who interact strategically at the lower level, introducing a game among followers in response to a common incentive.
This raises questions of equilibrium selection and the collusion-proofness of the resulting mechanism.

\paragraph{Strategic followers}
The myopic best-response assumption is central to the analytical tractability of the present results.
Relaxing this assumption to allow followers that have knowledge of the system dynamics and optimize over a future horizon rather than greedily at each stage, changes the follower's response and the structure of the leader's design problem.
Characterizing optimal static incentives for strategic followers is thus a relevant extension.

\paragraph{Incentive function design}
A limitation of the bilinear incentive studied here is the presence of a nonzero steady-state tracking error under the follower's best response.
Incentive designs that eliminate this offset,
for instance by rationalizing integral control,
are a promising avenue toward improved closed-loop performance within the static incentive paradigm.

\printbibliography

\end{document}